\let\oldReturn\Return
\renewcommand{\Return}{\State\oldReturn}
\newcommand{\name}[1]{\textnormal{#1}}
\newcommand{\length}[0]{\ell}
\newcommand{\opt}[0]{\textnormal{OPT}}
\newcommand{\class}[1]{\textnormal{#1}}
\newcommand{\floor}[1]{\left\lfloor #1 \right\rfloor}
\newcommand{\ceil}[1]{\left\lceil #1 \right\rceil}
\newcommand{\statement}[3]{
\begin{list}{}{
\setlength{\leftmargin}{0.1in}
\setlength{\rightmargin}{0.1in}
\setlength{\parsep}{0pt}
\setlength{\itemsep}{2pt}
\setlength{\topsep}{\itemsep}
\setlength{\partopsep}{\itemsep}
}
\item
{\name{#1}}
\item
{INSTANCE: #2}
\item
{QUESTION: #3}
\end{list}
\vspace{1mm}
}
\begin{document}

\title{Star Routing: Between Vehicle Routing and Vertex Cover}

\author{Diego {Delle Donne}\inst{1} \and Guido Tagliavini\inst{2}}

\authorrunning{D. Delle Donne \and G. Tagliavini}

\institute{Universidad Nacional de General Sarmiento, Instituto de Ciencias, Malvinas Argentinas, Argentina\\
\email{ddelledo@ungs.edu.ar}
           \and
	          Rutgers University, School of Arts and Sciences, New Jersey, US\\
	          \email{guido.tag@rutgers.edu}
}

\maketitle

\begin{abstract}
We consider an optimization problem posed by an actual newspaper company, which consists of computing a minimum length route for a delivery truck, such that the driver only stops at street crossings, each time delivering copies to all customers adjacent to the crossing. This can be modeled as an abstract problem that takes an unweighted simple graph $G = (V, E)$ and a subset of edges $X$ and asks for a shortest cycle, not necessarily simple, such that every edge of $X$ has an endpoint in the cycle.

We show that the decision version of the problem is strongly \class{NP-complete}, even if $G$ is a grid graph. Regarding approximate solutions, we show that the general case of the problem is \class{APX-hard}, and thus no PTAS is possible unless $\class{P} = \class{NP}$. Despite the hardness of approximation, we show that given any $\alpha$-approximation algorithm for metric \name{TSP}, we can build a $3\alpha$-approximation algorithm for our optimization problem, yielding a concrete $9/2$-approximation algorithm.

The grid case is of particular importance, because it models a city map or some part of it. A usual scenario is having some neighborhood full of customers, which translates as an instance of the abstract problem where almost every edge of $G$ is in $X$. We model this property as $|E - X| = o(|E|)$, and for these instances we give a $(3/2 + \varepsilon)$-approximation algorithm, for any $\varepsilon > 0$, provided that the grid is sufficiently big.

\keywords{vehicle routing \and vertex cover \and approximation algorithms \and computational complexity}

\end{abstract}

\section{Introduction}

Every morning, a well-known newspaper\footnote{Unfortunately, for confidentiality reasons, we cannot disclose their identity.} in Buenos Aires needs to deliver a copy to each subscriber by trucks. For now, assume there is only one truck. Traditionally, the truck stops in front of each customer's house, every time delivering a single copy of the paper. But now, the company thinks there could be a better (that is, cheaper) way to do it: instead of stopping to make a single delivery, the truck will only stop at street crossings, and each time the driver will pick up a pile of copies and deliver them to all customers located on any of the (typically four) adjacent streets. The goal is to minimize the number of blocks traveled by the truck.

We model the city topology as a simple graph, and the set of customers as a subset of edges. In other words, we distinguish blocks that have at least one customer, but we don't care if there is more than one customer in a single block. If $C$ is a cycle and $X$ is a subset of edges of a simple graph, we say that $C$ \textit{covers} $X$ if every edge of $X$ has an endpoint in $C$. The formal description of the problem is the following:

\statement{STAR ROUTING}
{A simple graph $G = (V, E)$, a non-empty subset of edges $X \subseteq E$, and a positive integer $K$.}
{Does $G$ have a cycle, not necessarily simple, of length at most $K$ that covers $X$?}

\noindent
Since all edges can be traversed in both directions, \name{STAR ROUTING} (or simply \name{STAR}) models all streets as two-way streets. Also, note that \name{STAR} doesn't ask about which (or how many) road crossings the truck should stop at and deliver during its journey.

\begin{figure}
\centering
\subfloat[A possible set of customers, marked as red dots, on a small part of Boedo neighborhood in Buenos Aires. The light blue area is an arbitrary boundary for the truck.]{
	\label{subfig:problem}
	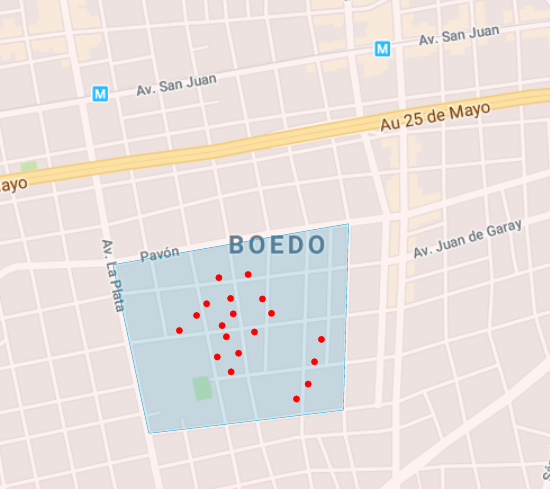
	}\hfill
\subfloat[A \name{STAR} instance that models the problem. Red edges are blocks that contain customers.]{
	\label{subfig:instance}
	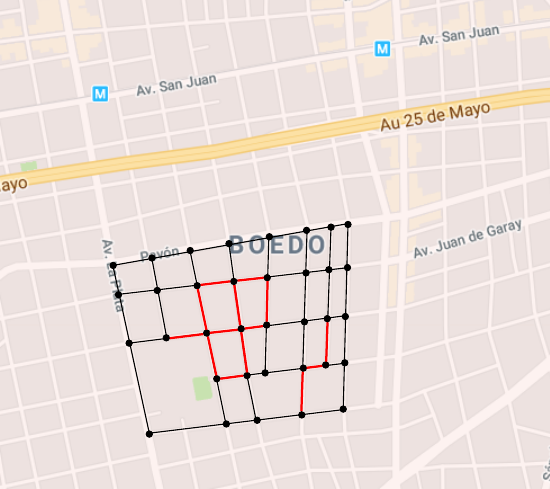
}
\caption{}
\label{fig:example}
\end{figure}

\begin{figure}
	\begin{center}
		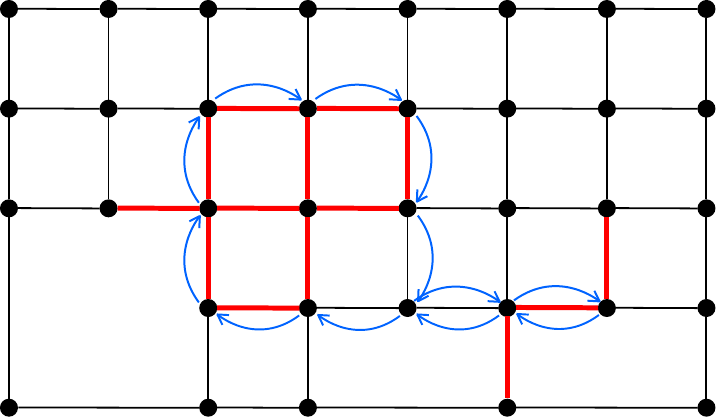
	\end{center}		
	\caption{Cleaned-up version of the \name{STAR} instance of Figure \ref{subfig:instance}. The arrows show a feasible solution.}
	\label{fig:solution}
\end{figure}

Consider the example of Figure \ref{subfig:problem}, which represents a real-life setting with subscribers shown as red dots. This is mapped to the \name{STAR} instance shown in Figure \ref{subfig:instance}. Each block that contains at least one customer is mapped as a red edge, and $X$ is the set of red edges. A feasible solution is presented in Figure \ref{fig:solution}. Indeed, this cycle of length $12$ given by the arrows is a feasible solution because every red edge has one endpoint in the cycle. In contrast, if we wanted to stop precisely at every customer's address, we would need to go through at least $16$ edges: one per red edge, plus $4$ more edges to move between the two connected components induced by the red edges. Thus, \name{STAR}'s solution is at least $40\%$ better, in terms of number of blocks traversed. This improvement may (or may not) be at the cost of greater overall time to perform the delivery, since now the driver has to walk from street crossings, carrying the newspapers. Clearly, the more packed the customers are, the better this alternative delivery model works, since a single vertex can cover many edges.

Keep in mind that a rigorous comparison between \name{STAR} and other delivery models is beyond the scope of this paper, as there are several practical considerations, like street orientation, speed limits or overall transit time, that we are not taking into account. Our focus is on studying \name{STAR}'s theoretical properties.

Despite newspaper delivery was the original motivation for this problem, it is worth noting that \name{STAR} may be applicable in other contexts as well, such as police patrol planning. In general, \name{STAR} captures characteristics from situations resembling covering problems but also involving vehicle routing features.

\paragraph*{\textbf{Related work.}}

A remarkable family of problems in combinatorial optimization are those known as vehicle routing problems (VRP). The basic component of a VRP are vehicles that move throughout a network, maybe starting and ending at some depot point, and moving between customers located over the network to deliver some sort of merchandise. The goal is usually minimizing some metric related to the total consumed time or the traveled distance. The origin of these problems can be traced back to the 1954 paper of Dantzig, Fulkerson and Johnson \cite{Da54}, in which they considered the \name{TSP}, which is a particular case of VRP. This work was followed by several other papers about the \name{TSP}. Clarke and Wright \cite{Cl64} added more than one vehicle to the problem, which led to the first proper formulation of VRP, though that name was not coined until the work of Golden, Magnanti and Nguyan \cite{Go77}.

In 1974, Orloff \cite{Or74} identified a class of routing problems of a single vehicle, which he called \name{GENERAL ROUTING PROBLEM} (\name{GRP}). The \name{GRP} takes a weighted graph $G = (V, E)$, and two sets $W \subseteq V$ and $F \subseteq E$, and asks to find a shortest cycle of $G$ that traverses every vertex in $W$ and every edge in $F$. This is a generalization of other well-known routing problems, like the \name{CHINESE POSTMAN PROBLEM} ($W = \emptyset$ and $F = E$), the \name{RURAL POSTMAN PROBLEM} ($W = \emptyset$), and the \name{TSP} ($G$ complete, $W = V$ and $F = \emptyset$). Notably, the first can be solved in polynomial time, whereas the decision versions of the latter two are \class{NP-complete} \cite{Le76}.

The \name{STAR} problem is a simple VRP with a single vehicle fleet, where we want to minimize the delivery cost, which we model as the total distance traveled by the vehicle. However, in contrast with traditional VRPs, the subset $X$ of edges containing customers can be covered just by visiting any of the two adjacent endpoints, rather than traveling along it. There are some variants of \name{TSP} with a similar flavor to that of \name{STAR}, in which the objective is to cover vertices with a more relaxed criteria than standard \name{TSP}. One of them is the \name{COVERING SALESMAN PROBLEM} (\name{CSP}) \cite{Cu89,Sh14} that takes a directed weighted graph and a positive integer $D$, and asks to find a minimum-length tour over a subset of vertices of $G$ such that every vertex not in the tour is within distance $D$ of some vertex in the tour. Current and Schilling \cite{Cu89} devised a simple heuristic for this problem, but its performance guarantee cannot be bounded due to the arbitrary weights. Interestingly, our approximation algorithm for the general version of \name{STAR} is similar to theirs, but since we assume unit weights we are able to derive a bound on the approximation ratio. Shaelaie et al. \cite{Sh14} presented metaheuristics for the \name{CSP} but, once again, they do not provide any theoretical guarantees. Another related problem is the \name{TSP WITH NEIGHBORHOODS} (\name{TSPN}) \cite{Ar94,Du03,De05}, that takes a set of regions in the euclidean plane, and asks for a shortest closed curve that visits each region. We should note that the grid version of \name{STAR}, which we will discuss later on, can be reduced to the rectilinear version of \name{TSPN}, in which each edge from $X$ is a region, but unfortunately the rectilinear \name{TSPN} has not been studied.

To the best of our knowledge, \name{STAR} hasn't been considered before, existing literature has little overlap with it, and it's the first VRP based on the notion of vertex cover.

\paragraph*{\textbf{Organization.}} In Section \ref{sec:complexity} we show that \name{STAR} is strongly \class{NP-complete}, even when the input graph is a grid. In Section \ref{sec:approximation_general} we study how well it is possible to approximate the general version of \name{STAR}. First, we give a lower bound by showing that \name{STAR} is \class{APX-hard}. Second, we provide a factory of approximation algorithms, which takes an $\alpha$-approximation algorithm for metric \name{TSP} and produces a $3\alpha$-approximation algorithm for \name{STAR}. This yields a $9/2$ approximation factor for the general case, and a $3 + \varepsilon$ factor when the graph is planar. In Section \ref{sec:approximation_grid} we develop a $(3 / 2 + \varepsilon)$-approximation algorithm for grid graphs, assuming there are asymptotically more edges with customers than not and that the grid is large enough. Finally, in Section \ref{sec:open} we state some open problems.

\paragraph*{\textbf{Notation.}} Let $\Pi$ be an optimization problem. Let $I$ be a valid input of $\Pi$. We write $\Pi^*(I)$ the value of an optimal solution of the problem $\Pi$ for $I$.

If $A$ is a finite set, $|A|$ is the cardinality of $A$. We denote $K(A)$ the complete graph whose set of vertices is $A$.

All graphs we consider in this paper are simple. All cycles and paths we consider are not necessarily simple. If $G$ is a graph, $\tau(G)$ is the cardinality of any minimum vertex cover of $G$. If $X$ is a subset of edges of $G$, $G[X]$ is the subgraph of $G$ induced by $X$. If $S$ is a path of $G$, $\length(S)$ is the number of edges of $S$ (counting repetitions). If the edges of $G$ have weights given by a function $w$, $\length_w(S)$ is the sum of the weights of the edges of $S$ (counting repetitions). If $T$ is another path of $G$, that starts where $S$ ends, $S \circ T$ is the path we get from first traversing $S$ and then $T$. If $u$ and $v$ are two vertices of $G$, $d_G(u, v)$ is the minimum $\ell(S)$ over every path $S$ between $u$ and $v$. If $p$ and $q$ are two points in $\mathbb{R}^2$, $d_1(p, q)$ is the Manhattan distance between them.

A grid graph with $n$ rows and $m$ columns is the cartesian product of graphs $P_n \square P_m$, where $P_k$ is the path of $k$ vertices. A star graph is a complete bipartite graph $K_{1, n}$, for some $n \geq 1$.

\section{\name{STAR} is hard, even for grids}
\label{sec:complexity}

In this section we show that \name{STAR} is \class{NP-complete} when we restrict $G$ to the class of grid graphs. We call this version of the problem \textit{grid \name{STAR}}. These instances are of practical interest, since grids are the most simple way of modelling a city layout. In particular, the problem is hard for planar graphs and for bipartite graphs, among all superclasses of grid graphs.

To prove completeness, we will reduce from the rectilinear variant of \name{TSP}. Recall the \name{TSP} takes a set of elements $S$ equipped with weights between each pair of elements, and a positive integer $L$, and asks if there exists a hamiltonian cycle in $K(S)$ with total weight $L$ or less. In the rectilinear version, the input is a set of points $P$ in the plane, with positive integer coordinates, and a positive integer $L$, and asks if $K(P)$ has a hamiltonian cycle with total Manhattan distance length $L$ or less.

The rectilinear \name{TSP} is \class{NP-complete}. In 1976, Garey et al. \cite{Ga76} proved this, by reducing from \name{EXACT COVER BY 3-SETS} (\name{X3C}), which takes a family $\mathcal{F} = \{F_1, \dots, F_t\}$ of $3$-element subsets of a set $U$ of $3n$ elements, and asks if there exists a subfamily $\mathcal{F'} \subseteq \mathcal{F}$ of parwise disjoint subsets such that $\cup_{F \in \mathcal{F'}} = U$. Since \name{X3C} has no numerical arguments, it is strongly \class{NP-complete}. The rectilinear \name{TSP} instance they build is such that both coordinates of every point in the set $P$, as well as the optimization bound $L$, are bounded by a polynomial on the size of the \name{X3C} instance. Thus, rectilinear \name{TSP} is strongly \class{NP-complete}.

The transformation we will use has a similar flavor than the one devised by Demaine and Rudoy \cite{De17} to show that solving a certain puzzle is \class{NP-complete}.

\begin{theorem}
Grid \name{STAR} is strongly \class{NP-complete}.
\end{theorem}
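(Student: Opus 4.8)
The plan is to prove membership in \class{NP} and then reduce from rectilinear \name{TSP}, which (as noted above) is strongly \class{NP-complete}. Membership is routine: to cover $X$ a cycle need only visit a vertex cover of $G[X]$, and since a grid is connected, doubling a spanning tree gives a closed walk through any prescribed vertex set of length at most $2|V|$. Hence whenever a feasible solution exists there is one of length at most $2|V|$, so either $K \ge 2|V|$ and the instance is trivially positive, or any feasible cycle has at most $K < 2|V|$ edges and serves as a polynomial-size certificate; verifying that a given closed walk lies in $G$, has length at most $K$, and covers $X$ is clearly polynomial.

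For hardness the key feature to respect is that a grid graph $P_n \square P_m$ is \emph{solid}: it is a full rectangle with no missing vertices or edges, so unlike in typical planar reductions we cannot carve out structure, and every constraint must be imposed through the choice of $X$. The embedding I would use fixes a small constant scaling factor $s$ (say $s = 3$), translates the input point set $P$ into the positive quadrant with a margin, and maps each point $p$ to the representative grid vertex $v_p = s \cdot p$, taking the grid large enough to contain every $v_p$ strictly in its interior. In the solid grid the shortest-path distance then satisfies $d_G(v_p, v_q) = s\, d_1(p,q)$, so grid distances between representatives reproduce, up to the factor $s$, exactly the Manhattan distances used by rectilinear \name{TSP}.

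The gadget is what gives the problem its name: at each $v_p$ I would place into $X$ all four edges incident to $v_p$, i.e.\ a star centered at $v_p$. Since $s \ge 3$, the neighborhoods of distinct representatives are disjoint, so these stars are edge-disjoint and there are no $X$-edges elsewhere. A closed walk covers the star at $v_p$ exactly when it visits $v_p$, or else visits all four of its neighbors, and the construction is designed so that the first option is never more expensive; granting this, the minimum length of a covering cycle equals the minimum length of a closed walk through all the $v_p$, which by the standard shortcutting argument is the optimum of metric \name{TSP} on $\{v_p\}$ under $d_G$, namely $s$ times the optimal rectilinear tour length on $P$. Setting $K = sL$ then makes $G$ admit a covering cycle of length at most $K$ if and only if $P$ admits a rectilinear tour of length at most $L$. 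Because $s$ is constant, the grid dimensions and $K$ stay polynomially bounded in the (strongly bounded) \name{TSP} instance, so strong \class{NP-completeness} follows.

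The main obstacle is this \emph{forcing lemma}: that an optimal covering cycle visits every center $v_p$ rather than covering some star through its four neighbors, and that this holds even though the cycle may be non-simple. The lever is a local structural fact: within the radius-$1$ ball around $v_p$ the four neighbors are pairwise non-adjacent (any two are at grid distance $2$), so the only edges inside the ball are the four star edges, and every walk that moves between two neighbors without leaving the ball must pass through $v_p$. Using the separation $d_G(v_p,v_q) \ge 3$, I would argue that a covering cycle avoiding $v_p$ must touch its four neighbors by excursions entering the ball from outside, and that two excursions reaching opposite neighbors can be spliced into a single excursion through $v_p$ without increasing length, since $v_p$ lies on a shortest path between opposite neighbors; the separation keeps each such surgery confined to one ball, so it cannot disturb the coverage of other stars. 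Iterating produces a closed walk through all centers of length no larger than the original, giving the lower bound of $s$ times the optimal tour, while the matching upper bound is immediate by realizing an optimal rectilinear tour directly as a closed walk through the $v_p$. The remaining points — interior placement with disjoint neighborhoods, and the absence of unintended shortcuts because the inter-gadget region is free of $X$-edges — are routine consequences of the choice of scale and margin.
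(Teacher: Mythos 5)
Your reduction differs from the paper's in two coupled choices: a \emph{constant} scale factor $s=3$ together with a full four-edge star gadget at each representative $v_p$. This combination forces you to prove the exact ``forcing lemma'' (every covering cycle can be converted, at no cost, into a closed walk through all centers), and that lemma is where the proof breaks down as written. The surgery you describe --- splicing ``two excursions reaching opposite neighbors into a single excursion through $v_p$'' --- is not a local operation: if the walk visits one neighbor $u$ of $v_p$ at one time and the opposite neighbor $u'$ at a much later time, the portion of the walk between those two visits can be arbitrarily long and responsible for covering many other stars, so replacing it by the two-edge path $u \to v_p \to u'$ destroys coverage elsewhere. The cheap truly-local alternative (inserting a dead-end detour $u \to v_p \to u$) costs $+2$, and you can only recoup that by deleting visits to the other neighbors of $v_p$; but a traversal $c \to w \to d$ of a neighbor $w$ with $c \neq d$ cannot be deleted for free, since $d_G(c,d)=2$ in any case. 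The sentence ``the separation keeps each such surgery confined to one ball'' conflates spatial separation of the gadgets with separation along the walk's itinerary. Consequently the lower-bound direction (covering cycle of length $\leq sL$ implies a rectilinear tour of length $\leq L$) is not established. The fact you \emph{can} extract without the forcing lemma --- that the cycle passes within distance $1$ of each $v_p$ --- only gives a tour of length at most $L + 2n/s$, which with constant $s$ does not round down to $L$.

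The paper sidesteps the forcing lemma entirely by scaling by $c = 2(n+1)$ rather than a constant: a single edge incident to $cp_i$ then suffices as the gadget, the covering cycle is only known to come within distance $1$ of each $cp_i$, and the resulting additive error of $2n$ becomes $2n/c < 1$ after rescaling, so integrality of rectilinear tour lengths finishes the argument. Your proof can be repaired either by adopting that trick (after which the star gadget and the forcing lemma become unnecessary) or by supplying a genuinely correct exchange argument for the forcing lemma, which is considerably more delicate than your sketch suggests. Your \class{NP}-membership argument is fine, and in fact more careful than the paper's about certificate size when $K$ is large.
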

\begin{proof}
Given a cycle of $G$ it's easy to check in polynomial time if it covers all edges in $X$, and if it has length $K$ or less. Thus, the problem is in \class{NP}.

Now we reduce from rectilinear \name{TSP}. Let $P = \{p_1, \dots, p_n\}$ and the bound $L$ be an instance of rectilinear \name{TSP}. Let $m$ be the maximum coordinate of any point in $P$, so that all points lie in the rectangle $[1, m] \times [1, m]$. Let $c = 2(n + 1)$. We will build a grid graph $G$ by taking the $m \times m$ rectangular grid of points with lower left corner at $(1, 1)$, and expanding it by a factor of $c$. Formally, if $G = (V, E)$, then $V$ is the set of all integer coordinates points in $[c, cm] \times [c, cm]$, and $E$ is the natural set of edges we need to produce a grid out of $V$. Note that $cp_i \in V$ for every $i$. That is, multiplying by $c$ we map points from $P$ to $V$.

Let $e_i$ be any adjacent edge to $cp_i$ in $G$, and let $X = \{e_1, \dots, e_n\}$. Finally, let $K = cL$.

\begin{figure}
	\begin{center}
		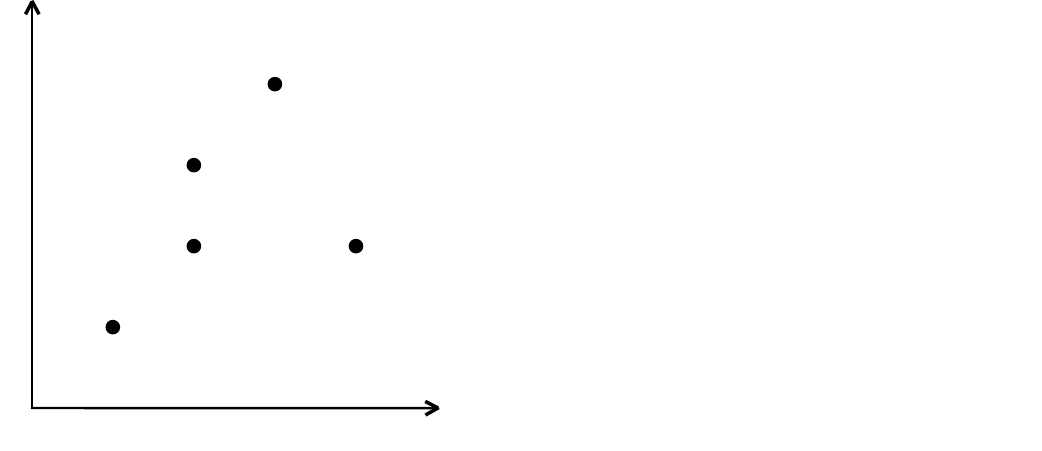
	\end{center}		
	\caption{Mapping an instance of rectilinear \name{TSP} (on the left) to grid \name{STAR} (on the right). The marked points on the first grid are $p_i$s, which are mapped to the second grid as $cp_i$. The light blue area denotes the graph $G$. The red edges make up the set $X$.}
	\label{fig:reduction}
\end{figure}

\paragraph*{\textbf{Polynomial time.}}

Since rectilinear \name{TSP} is strongly \class{NP-complete}, we can assume $m$ and $L$ are polynomial. The grid $G$ has size $O((cm)^2)$, which is polynomial because both $c$ and $m$ are. The coordinates of every vertex are bounded by $O(cm)$. Computing $X$ is obviously polynomial. Finally, computing $K = cL$ is also polynomial, since $L$ is polynomial. Thus, the reduction takes polynomial time, and every numerical value is bounded by a polynomial in the transformation's input size.

\paragraph*{\textbf{Rectilinear \name{TSP} to grid \name{STAR}.}}

Assume there is a hamiltonian cycle with Manhattan distance length $L$ or less, in $K(P)$. W.l.o.g., suppose $T = \langle p_1, \dots, p_n, p_{n + 1} = p_1 \rangle$ is such a cycle. For each $i$, let $S_i$ be a shortest path in $G$ from $cp_i$ to $cp_{i + 1}$. Then $S = S_1 \circ \dots \circ S_n$ is a cycle in $G$ that goes through every vertex $cp_i$, and thus covers $X$. Its length is
\[\ell(S) = \sum_{i = 1}^n \ell(S_i)
= \sum_{i = 1}^n d_1(cp_i, cp_{i + 1})
= c\sum_{i = 1}^n d_1(p_i, p_{i + 1})
= c\ell_{d_1}(T)
\leq cL = K
\]

\paragraph*{\textbf{Grid \name{STAR} to rectilinear \name{TSP}.}}
Suppose there is a cycle $S = \langle s_1, \dots, s_m, s_{m + 1} = s_1 \rangle$ of length $K$ or less that covers $X$ in $G$. At some point while we traverse $S$, we must get close to each $cp_i$, since the cycle covers $e_i$. More specifically, there exists an index $1 \leq j_i \leq m$ such that either $s_{j_i}$ is exactly $cp_i$, or $e_i = (cp_i, s_{j_i})$. This implies that $d_1(cp_i, s_{j_i}) \leq 1$. Assume w.l.o.g. that $j_1 \leq \dots \leq j_n$, since otherwise we can rearrange the indexes of the points $cp_i$. Consider the hamiltonian cycle $T = \langle p_1, \dots, p_n, p_{n + 1} = p_1 \rangle$ of $K(P)$. (Define, for convenience, $j_{n + 1} := j_1$.) We need to show that $\ell_{d_1}(T) \leq L$. Since $\ell_{d_1}(T)$ is an integer, it suffices to prove $\ell_{d_1}(T) < L + 1$. We start by rewriting
\[
\ell_{d_1}(T) = \sum_{i = 1}^n d_1(p_i, p_{i + 1}) = (1/c) \sum_{i = 1}^n d_1(cp_i, cp_{i + 1})
\]

\noindent
Since $d_1$ is a metric, we can decompose
\[d_1(cp_i, cp_{i + 1}) \leq d_1(cp_i, s_{j_i}) + d_1(s_{j_i}, s_{j_{i + 1}}) + d_1(s_{j_{i + 1}}, cp_{i + 1}) \leq 2 + d_1(s_{j_i}, s_{j_{i + 1}})\]

\noindent
Therefore
\[\ell_{d_1}(T) \leq (1/c) \left(2n + \sum_{i = 1}^n d_1(s_{j_i}, s_{j_{i + 1}}) \right)\]

\noindent
Consider the subpaths $S_i := \langle s_{j_i}, s_{j_i + 1}, \dots, s_{j_{i + 1}} \rangle$ (here we are using the fact that the indexes $j_i$ are ordered). Then $d_1(s_{j_i}, s_{j_{i + 1}}) \leq \ell_{d_1}(S_i)$. Since these subpaths are disjoint pieces of $S$, we have $\sum_{i = 1}^n \ell_{d_1}(S_i) \leq \ell_{d_1}(S) = \ell_{d_G}(S)$, so
\[\ell_{d_1}(T) \leq (1/c) \left(2n + \ell_{d_G}(S) \right) \leq 2n / c + K / c = n / (n + 1) + L < 1 + L \]

\noindent
as desired.
\qed
\end{proof}

\section{An approximation algorithm for the general case}
\label{sec:approximation_general}

Since \name{STAR} is a hard problem in regards to finding exact solutions, we investigate approximation algorithms. We start by showing that the general version of the problem is hard to approximate within a constant factor arbitrarily close to $1$. For this, we reduce from approximating the \name{VERTEX COVER} (\name{VC}) problem, which is known to be APX-hard \cite{Di04}. Given a simple graph $G$, \name{VC} asks for a minimum cardinality vertex cover of $G$.

\begin{theorem}
For every $\alpha$-approximation algorithm for \name{STAR} there is an $\alpha$-approximation algorithm for \name{VC}.
\end{theorem}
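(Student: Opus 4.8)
The plan is to give an approximation-preserving reduction from \name{VC} to \name{STAR}. From a graph $G = (V,E)$ I build a \name{STAR} instance $(H,X)$ with two properties: (i) every vertex cover of size $k$ yields a cycle of length $2k$ covering $X$, so that $\name{STAR}^*(H,X) \le 2\tau(G)$; and (ii) from every cycle of length $\ell$ covering $X$ I can extract a vertex cover of $G$ of size at most $\ell/2$. Granting (i) and (ii), if $\mathcal{A}$ is an $\alpha$-approximation algorithm for \name{STAR}, then running it on $(H,X)$ returns a covering cycle of length $\ell \le \alpha\,\name{STAR}^*(H,X) \le 2\alpha\,\tau(G)$, and the cover extracted by (ii) has size at most $\ell/2 \le \alpha\,\tau(G)$. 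Since $\tau(G) = \name{VC}^*(G)$, this is exactly an $\alpha$-approximation for \name{VC}. So it suffices to exhibit a polynomial-time construction satisfying (i) and (ii); note that only the \emph{upper} bound in (i) is needed, never a matching lower bound.

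For the construction I would obtain $H$ from $G$ by adjoining a universal vertex $r$ adjacent to every $v \in V$, and choose $X$ so that a cycle covers $X$ exactly when the vertices of $G$ lying on it form a vertex cover of $G$. The most direct choice is $X = E$: a cycle covers $X$ iff for each $uv \in E$ at least one of $u,v$ is visited, which is precisely the vertex-cover condition. Direction (i) is then the easy half: given a cover $\{v_1,\dots,v_k\}$, the closed walk $\langle r, v_1, r, v_2, \dots, r, v_k, r\rangle$ lies in $H$, covers $X$, and has length exactly $2k$, so $\name{STAR}^*(H,X) \le 2\tau(G)$. For direction (ii), the set $S$ of $G$-vertices visited by any covering cycle is automatically a vertex cover, so it only remains to bound $|S|$ by half the tour length.

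The bound $|S| \le \ell/2$ holds as soon as no two vertices of $V$ are consecutive on the tour, i.e.\ every step joining two visited $G$-vertices passes through $r$: the tour then alternates between $V$ and $V(H)\setminus V$, so at most half of its $\ell$ steps land in $V$. Guaranteeing this separation while keeping the encoding faithful and free of mandatory overhead is exactly where I expect the difficulty to lie. With $X = E$ the original edges remain available as shortcuts, and a tour may traverse $uv$ directly and cover several edges far more cheaply than $2$ per cover vertex; this is the ``vehicle routing'' side of the problem undercutting the ``vertex cover'' side, and it is precisely what would make the separation property—and the clean identity $\name{STAR}^*(H,X)=2\tau(G)$—fail in general. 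I therefore expect the real work to be in redesigning the gadget so that the vertices of $V$ form an independent set of $H$ and can be linked only through $r$—for instance by subdividing each edge and attaching the covering requirement to the subdivision vertices—and then re-establishing (i) for the modified instance, verifying that forcing the tour through the subdivision vertices contributes no additive term that would dominate $\tau(G)$. Reconciling faithful encoding, the per-vertex cost of $2$, and the independence needed for (ii) is the crux of the proof.
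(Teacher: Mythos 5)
There is a genuine gap, and you have correctly located it yourself: your construction establishes property (i) but not property (ii), and the proof stops at the point where the actual work would begin. With a universal vertex $r$ and $X = E$, nothing prevents the returned tour from walking along original edges of $G$ and visiting up to $\length(S)$ distinct vertices of $V$, so the extracted cover is only bounded by $\length(S) \leq 2\alpha\,\tau(G)$, i.e.\ a $2\alpha$-approximation rather than $\alpha$. The repair you sketch (subdivide each edge so that $V$ becomes independent and the tour must alternate through $r$ or through subdivision vertices) is not carried out, and it is genuinely problematic: once the covering requirement sits on edges incident to a subdivision vertex $x_{uv}$, a tour can cover that requirement by visiting $x_{uv}$ itself rather than $u$ or $v$, which breaks the correspondence between covering cycles and vertex covers of $G$; and forcing the tour to thread through subdivision vertices threatens to add overhead proportional to $|E|$, which can dominate $\tau(G)$.

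The missing idea is that you do not need $V$ to be independent, and you do not need the per-vertex cost of $2$ at all: take the host graph to be the \emph{complete} graph $K(V)$ with $X = E$. Then every hop costs exactly $1$, so a closed walk of length $\length$ visits at most $\length$ distinct vertices (your direction (ii), with constant $1$ instead of $1/2$ and no alternation argument), while any vertex cover of size $k \geq 2$ can be toured in an arbitrary cyclic order at cost exactly $k$, giving $\name{STAR}^*(K(V), E) \leq \tau(G)$ (your direction (i), again with constant $1$). The "shortcut" phenomenon you were trying to suppress becomes harmless because you count distinct vertices, not alternating steps. Combining the two gives $|S| \leq \length(S) \leq \alpha\,\name{STAR}^*(K(V),E) \leq \alpha\,\tau(G)$. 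The only care needed is for the degenerate cases $\tau(G) \leq 1$ (empty edge set, or $G$ a star), which must be answered directly since a single vertex is not a cycle of positive length; this is how the paper handles them.
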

\begin{proof}
Let $A_{\name{STAR}}$ be an $\alpha$-approximation algorithm for \name{STAR}. Given an input graph $G = (V, E)$, the approximation algorithm for \name{VC} proceeds as follows. If $E = \emptyset$, return an empty set. If $G$ is a star graph, return the central vertex. Otherwise, every feasible vertex cover has two or more vertices. Consider the instance $I = (K(V), E)$ of \name{STAR}, that is, a complete graph where the set of customers are the edges of $G$. The algorithm computes $S = A_{\name{STAR}}(I)$ and outputs $S$ as a set.

The algorithm is polynomial, since we can construct $I$ in polynomial time. Note that every cycle in $K(V)$ that covers $E$ induces a vertex cover of $G$, and therefore $S$ is a feasible vertex cover of $G$. Reciprocally, every vertex cover of $G$ induces a cycle in $K(V)$ that covers $E$ (by fixing any order among the vertices in the cover), which implies that $\name{STAR}^*(I) \leq \tau(G)$. Since $S$ is an $\alpha$-approximation, we have $|S| \leq \ell(S) \leq \alpha \text{ } \name{STAR}^*(I) \leq \alpha \text{ } \tau(G)$.
\qed
\end{proof}

Dinur and Safra showed that it's hard to approximate \name{VC} within a factor $1.3606$ of optimal \cite{Di04}. Thus, \name{STAR} is hard to approximate as well.

\begin{corollary}
It's \class{NP-hard} to approximate \name{STAR} within a factor $1.3606$ of optimal.
\end{corollary}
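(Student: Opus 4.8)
The plan is to obtain the corollary as an immediate consequence of Theorem~2 together with the Dinur--Safra hardness bound for \name{VC}. I would argue by contraposition. Suppose, for the sake of contradiction, that there were a polynomial-time algorithm $A_{\name{STAR}}$ that approximates \name{STAR} within a factor $\alpha = 1.3606$ of optimal. I would then feed this algorithm into the construction of Theorem~2, which converts any $\alpha$-approximation algorithm for \name{STAR} into an $\alpha$-approximation algorithm for \name{VC}.

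First I would verify that the resulting \name{VC} algorithm still runs in polynomial time. The construction of Theorem~2 builds the \name{STAR} instance $I = (K(V), E)$ in polynomial time, makes a single call to $A_{\name{STAR}}$, and performs only trivial post-processing (reading off the vertices of the returned cycle, together with the two constant-time special cases for $E = \emptyset$ and for star graphs), so polynomiality is preserved. Second, Theorem~2 guarantees that the output is a feasible vertex cover of cardinality at most $\alpha\,\tau(G)$, so the factor $\alpha$ carries over exactly. Hence a polynomial-time $1.3606$-approximation for \name{STAR} would yield a polynomial-time $1.3606$-approximation for \name{VC}.

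Finally I would invoke the result of Dinur and Safra~\cite{Di04}, stating that approximating \name{VC} within the factor $1.3606$ is \class{NP-hard}. The polynomial-time $1.3606$-approximation for \name{VC} produced above would therefore imply $\class{P} = \class{NP}$, contradicting the assumption and establishing that approximating \name{STAR} within $1.3606$ is \class{NP-hard}.

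As for obstacles, there is essentially none beyond the bookkeeping just described: the entire difficulty has already been absorbed into Theorem~2, whose approximation-preserving reduction does all the work, and into the cited \name{VC} inapproximability bound. The only point deserving care is that the factor must be transferred with no additive or multiplicative slack, which is exactly what the chain $|S| \le \ell(S) \le \alpha\,\name{STAR}^*(I) \le \alpha\,\tau(G)$ from the proof of Theorem~2 provides.
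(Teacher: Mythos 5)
Your proposal is correct and matches the paper's intent exactly: the paper treats this corollary as an immediate consequence of Theorem 2 combined with the Dinur--Safra bound for \name{VC}, which is precisely the contrapositive argument you spell out. No gaps.
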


Therefore, \name{STAR} doesn't admit a PTAS unless $\class{P} = \class{NP}$, and thus the best we can hope for is some constant-factor approximation algorithm. Indeed, we now show that \name{STAR} admits one.

During the rest of this paper, we denote $(G, X)$ an instance of \name{STAR}, and write $\opt := \name{STAR}^*(G, X)$. Recall that $X \neq \emptyset$.

\begin{lemma}
\label{lem:star_tau_inequality}
If $G[X]$ is not a star graph, then $\opt \geq \tau(G[X])$.
\end{lemma}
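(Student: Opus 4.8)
The plan is to fix an optimal cycle $C$ covering $X$, so that $\ell(C) = \opt$, and to sandwich $\tau(G[X])$ between $\opt$ and the number of distinct vertices appearing on $C$. Write $V(C)$ for the set of vertices that lie on $C$.

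First I would observe that $V(C)$ is itself a vertex cover of $G[X]$. Indeed, by definition $C$ covers $X$, which means every edge of $X$ has an endpoint on $C$, i.e.\ an endpoint in $V(C)$. Hence $\tau(G[X]) \le |V(C)|$.

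Next I would bound $|V(C)|$ by $\ell(C)$. Regarding $C$ as a closed walk $\langle v_1, \dots, v_k, v_{k+1} = v_1 \rangle$ with $k = \ell(C)$ edges, the only vertices it visits are $v_1, \dots, v_k$, so there are at most $k$ of them and $|V(C)| \le \ell(C)$. Chaining the two inequalities gives $\tau(G[X]) \le |V(C)| \le \ell(C) = \opt$, which is exactly the claim.

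The one place where the hypothesis is needed --- and the step I expect to require the most care --- is the validity of the inequality $|V(C)| \le \ell(C)$, which can fail in the degenerate case $\ell(C) = 0$, where $C$ is a single vertex with $|V(C)| = 1 > 0 = \ell(C)$. This is where the assumption that $G[X]$ is not a star enters: if $C$ were a single vertex $v$, then $v$ would be an endpoint of every edge of $X$, and since $X \neq \emptyset$ this would make $G[X]$ a star centered at $v$, contradicting the hypothesis. Thus $C$ must traverse at least one edge, the bound $|V(C)| \le \ell(C)$ applies, and the argument goes through.
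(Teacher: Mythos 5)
Your proof is correct and follows essentially the same route as the paper's: take an optimal covering cycle, note that its vertex set is a vertex cover of $G[X]$, bound the number of distinct vertices by the length, and use the non-star hypothesis to rule out the degenerate single-vertex cycle. Your write-up just makes explicit the degenerate case that the paper dismisses with ``it's easy to see.''
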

\begin{proof}
	Let $S$ be an optimal solution of $\name{STAR}$. Since $S$ covers $X$, we can extract a vertex cover $C$ of $G[X]$ from the set of vertices of $S$. Since $G[X]$ is not a star, it's easy to see that $S$ has two or more vertices, and thus $\length(S) \geq |C|$. Hence, $\opt = \length(S) \geq |C| \geq \tau(G[X])$.
\qed
\end{proof}

From now on, we assume $G[X]$ is not a star. It's easy to both recognize a star graph and, in that case, return the optimal solution (the central vertex of the star) in polynomial time.

\begin{lemma}
\label{lem:feasible_tsp_to_feasible_star}
Let $C$ be a vertex cover of $G[X]$. Starting from a feasible solution $T$ of $\name{TSP}$ for $(C, d_G)$ we can build, in polynomial time in $G$, a feasible solution $S$ of $\name{STAR}$ for $(G, X)$, such that $\length(S) = \length_{d_G}(T)$.
\end{lemma}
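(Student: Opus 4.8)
The goal is to turn a TSP tour on the cover vertices $C$ (under the shortest-path metric $d_G$) into a STAR cycle in $G$ of exactly the same length. The plan is to realize each abstract TSP edge as an actual walk in $G$. Concretely, suppose $T = \langle v_1, \dots, v_k, v_{k+1} = v_1 \rangle$ is a Hamiltonian cycle on the vertex set $C$, with weight $\length_{d_G}(T) = \sum_{i} d_G(v_i, v_{i+1})$. For each consecutive pair $(v_i, v_{i+1})$, I would fix a shortest path $P_i$ in $G$ joining $v_i$ to $v_{i+1}$, so that $\length(P_i) = d_G(v_i, v_{i+1})$. Concatenating these yields the closed walk $S = P_1 \circ P_2 \circ \dots \circ P_k$ in $G$.

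Next I would verify the three required properties. First, $S$ is a cycle (closed walk) in $G$: each $P_i$ ends where $P_{i+1}$ begins, namely at $v_{i+1}$, and $P_k$ returns to $v_1$, so the concatenation is well-defined and closed. Second, $S$ covers $X$: every vertex of $C$ appears on $S$ (as an endpoint of some $P_i$), and since $C$ is a vertex cover of $G[X]$, every edge of $X$ has an endpoint in $C$ and hence an endpoint on $S$. Third, the length is preserved exactly, since by definition of the shortest-path metric
\[
\length(S) = \sum_{i=1}^{k} \length(P_i) = \sum_{i=1}^{k} d_G(v_i, v_{i+1}) = \length_{d_G}(T).
\]

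For the polynomial-time claim, I would note that each shortest path $P_i$ can be computed by a single-source shortest-path (e.g.\ BFS, since $G$ is unweighted) computation in $G$, and there are at most $|C| \leq |V|$ of them, so the whole construction runs in time polynomial in the size of $G$.

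I do not expect any genuine obstacle here; the lemma is essentially a bookkeeping argument that the $d_G$-metric is exactly the length of realized walks. The only point requiring a little care is making sure the concatenation is a legitimate closed walk in $G$ and that every intermediate vertex is a real vertex of $G$ — this is immediate because each $P_i$ is an honest path in $G$ and shares its endpoints with its neighbors. One subtlety worth stating explicitly is that the cover $C$ need not be minimum: the argument works for any vertex cover of $G[X]$, which is precisely what the lemma allows, and will be convenient when later combining this with an approximate TSP solver.
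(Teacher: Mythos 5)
Your proposal is correct and follows essentially the same route as the paper: replace each consecutive pair of the tour by a shortest path in $G$, concatenate, note the resulting closed walk visits all of $C$ and hence covers $X$, and observe the length equals $\sum_i d_G(v_i, v_{i+1}) = \length_{d_G}(T)$. The extra remarks on well-definedness of the concatenation and on $C$ not needing to be minimum are fine but add nothing beyond the paper's argument.
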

\begin{proof}
Let $T = \langle t_1, \dots, t_n, t_{n + 1} = t_1 \rangle$. Let $S_i$ be any shortest path between $t_i$ and $t_{i + 1}$, in $G$. Consider the path $S = S_1 \circ \dots \circ S_n$ of $G$, which covers $X$, since it traverses every vertex in $C$. This path can be computed in polynomial time, since it's the union of a polynomial number of shortest paths of $G$. We have $\length(S) = \sum_{i = 1}^{n}\length(S_i) = \sum_{i = 1}^{n}d_G(t_i, t_{i + 1}) = \length_{d_G}(T)$.
\qed
\end{proof}

Recall the classic $2$-approximation for \name{VC}, shown in Algorithm \ref{alg:vertex_cover_approximation}. We will refer to it as the \textit{approximation via matching}.

\begin{algorithm}
  \caption{\name{VC} $2$-approximation via matching}
  \label{alg:vertex_cover_approximation}
  \begin{algorithmic}[1]
  	\Require A simple graph $G$.
  	\State Compute any maximal matching $M$ of $G$. Let $M = \{(u_1, v_1), \dots, (u_m, v_m)\}$.
	\Return $\{u_1, \dots, u_m, v_1, \dots, v_m\}$
  \end{algorithmic}
\end{algorithm}

\begin{theorem}
\label{thm:bound_tsp_star}
Let $C$ be a vertex cover of $G[X]$, built with the approximation via matching. Then $\name{TSP}^*(C, d_G) \leq 3 \text{ } \opt$.
\end{theorem}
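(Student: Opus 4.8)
The plan is to upper-bound $\name{TSP}^*(C, d_G)$ by exhibiting a single closed walk in $G$ that visits every vertex of $C$, and then invoking the fact that $d_G$ is a metric. Concretely, if $W$ is any closed walk in $G$ passing through all of $C$, then ordering the vertices of $C$ by their first appearance on $W$ and joining consecutive ones by shortest paths yields a Hamiltonian cycle on $C$ whose $d_G$-length is at most $\length(W)$, by the triangle inequality (shortcutting of a closed walk). Hence it suffices to build such a $W$ with $\length(W) \leq 3\,\opt$.

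To build $W$, I would start from an optimal $\name{STAR}$ solution $S$, a closed walk of length $\opt$ that covers $X$. Write $C = \{u_1, v_1, \dots, u_m, v_m\}$, where $M = \{(u_1,v_1),\dots,(u_m,v_m)\}$ is the maximal matching produced by the approximation via matching. Since $M \subseteq X$ and $S$ covers $X$, at least one endpoint of every matching edge lies on $S$. For each vertex of $C$ that is \emph{not} already on $S$, its matching partner is on $S$ and is adjacent to it in $G$, so at the point where $S$ visits that partner I can splice in a detour (out along the matching edge and back) of length $2$. Because at least one endpoint of each of the $m$ matching edges is on $S$, at most one endpoint per edge is off $S$, so at most $m$ detours are inserted, each of length $2$.

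The resulting closed walk $W$ visits all of $C$ and has $\length(W) \leq \opt + 2m$. To finish, I would combine the matching bound with Lemma~\ref{lem:star_tau_inequality}: since $M$ is a matching of $G[X]$ every vertex cover needs a distinct vertex per matching edge, so $m = |M| \leq \tau(G[X])$; and since $G[X]$ is not a star, $\tau(G[X]) \leq \opt$. Therefore $2m \leq 2\,\opt$, and
\[
\name{TSP}^*(C, d_G) \leq \length(W) \leq \opt + 2m \leq 3\,\opt .
\]

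I expect the main obstacle to be the detour bookkeeping: one must argue carefully that the covering property of $S$ forces at least one endpoint of every matching edge onto $S$, so that the number of detours is controlled by $m$ rather than by $|C| = 2m$, and that each detour costs exactly $2$ precisely because matched vertices are adjacent in $G$. Once this accounting is in place, the shortcutting reduction and the chain $m \leq \tau(G[X]) \leq \opt$ are routine.
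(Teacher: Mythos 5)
Your proof is correct and follows essentially the same strategy as the paper's: both arguments charge $\opt$ for travelling along the optimal \name{STAR} cycle between consecutive matched representatives and an extra $2$ per matching edge to reach the partner, arriving at the bound $\opt + 2m$, and both close with $2m \leq 2\,\tau(G[X]) \leq 2\,\opt$. The only difference is presentational: you build an augmented closed walk and shortcut it, whereas the paper writes down the tour $\langle u_1, v_1, u_2, v_2, \dots \rangle$ explicitly and applies the triangle inequality leg by leg; the accounting is identical.
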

\begin{proof}

Let $C = \{u_1, \dots, u_m, v_1, \dots, v_m\}$, such that each $e_i := (u_i, v_i)$ is an edge of the maximal matching. Let $S = \langle s_1, \dots, s_n, s_{n + 1} = s_1 \rangle$ be an optimal solution of $\name{STAR}$ for $(G, X)$. The key observation is that since $e_i \in X$, and $S$ covers $X$, at least one of $u_i$ or $v_i$ is in $S$. W.l.o.g., assume $u_i$ is in $S$. Hence, for each $u_i$, there exists an index $1 \leq j_i \leq n$ such that $u_i = s_{j_i}$ (we define $j_{m + 1} := j_1$). W.l.o.g., assume that $j_1 \leq \dots \leq j_m$, since otherwise we can rearrange the elements of $C$ to satisfy it. Given this ordering, consider $T = \langle u_1, v_1, u_2, v_2, \dots, u_m, v_m, u_{m + 1} = u_1 \rangle$, which is a feasible solution of $\name{TSP}$ for $(C, d_G)$. It suffices to show that $\length_{d_G}(T) \leq 3 \text{ } \opt$. Figure \ref{fig:approximation} shows the sets and cycles defined so far.

\begin{figure}
	\begin{center}
		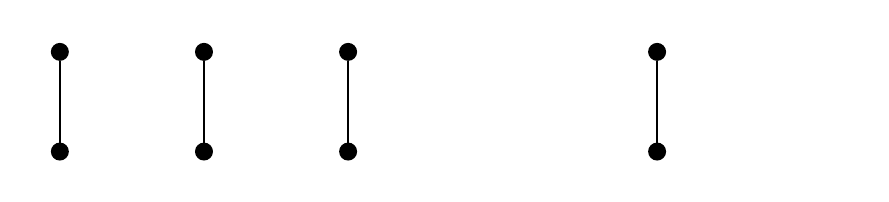
	\end{center}		
	\caption{Relation between $C$, $S$, and $T$. The curly blue arrows denote $S$, and the green arrows denote $T$. We do not show the edges that close the cycle. Also, $S$ may contain $v_i$s, but we don't illustrate this.}
	\label{fig:approximation}
\end{figure}

We have that
\[\ell_{d_G}(T) = \sum_{i = 1}^m(d_G(u_i, v_i) + d_G(v_i, u_{i + 1})) = \sum_{i = 1}^m(1 + d_G(v_i, u_{i + 1}))\]

\noindent
Since $d_G$ is a metric, $d_G(v_i, u_{i + 1}) \leq d_G(v_i, u_i) + d_G(u_i, u_{i + 1}) = 1 + d_G(u_i, u_{i + 1})$. Hence,
\[\ell_{d_G}(T) \leq \sum_{i = 1}^m(2 + d_G(u_i, u_{i + 1})) = 2m + \sum_{i = 1}^m d_G(u_i, u_{i + 1})\]

\noindent
Recall that $u_i = s_{j_i}$ for each $i$. Consider the subpaths $S_i := \langle s_{j_i}, s_{j_i + 1}, \dots, s_{j_{i + 1}}\rangle$. Then, $d_G(u_i, u_{i + 1}) = d_G(s_{j_i}, s_{j_{i + 1}}) \leq \length(S_i)$, and therefore
\[
\ell_{d_G}(T) \leq 2m + \sum_{i = 1}^m\length(S_i) \leq 2m + \length(S) = 2m + \opt
\]

\noindent
Since $C$ is a $2$-approximation, $2m = |C| \leq 2 \text{ } \tau(G[X])$. Finally, we use Lemma \ref{lem:star_tau_inequality} to get $|C| \leq 2 \text{ } \opt$, and we arrive to the desired bound.
\qed
\end{proof}

The proposed approximation algorithm for \name{STAR} is shown in Algorithm \ref{alg:star_approximation}. Note that the instance $(C, d_G)$ of \name{TSP} that $A_{\name{TSP}}$ approximates is, indeed, a metric instance, because $d_G$ is a metric.

\begin{algorithm}
  \caption{Approximation algorithm for \name{STAR}}
  \label{alg:star_approximation}
  \begin{algorithmic}[1]
  	\Require An instance $(G, X)$ of \name{STAR}.
  	\State Let $A_{\name{VC}}$ be the approximation via matching algorithm. Let $A_{\name{TSP}}$ be an approximation algorithm for metric \name{TSP}.
  	\State Compute $C = A_{\name{VC}}(G[X])$.
	\State Compute $T = A_{\name{TSP}}(C, d_G)$.
	\State Using $T$, build $S$ as in Lemma \ref{lem:feasible_tsp_to_feasible_star}.
	\Return $S$
  \end{algorithmic}
\end{algorithm}

\begin{theorem}
\label{th:star_approximation_algorithm}
If $A_{\name{TSP}}$ is an $\alpha$-approximation algorithm for metric \name{TSP}, then Algorithm \ref{alg:star_approximation} is a $3\alpha$-approximation algorithm for \name{STAR}.
\end{theorem}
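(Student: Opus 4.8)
The plan is to establish two things: that Algorithm \ref{alg:star_approximation} runs in polynomial time and returns a feasible solution, and that the length of its output is within a factor $3\alpha$ of $\opt$. Since all of the genuine work has already been carried out in the preceding lemmas and in Theorem \ref{thm:bound_tsp_star}, the argument is essentially a matter of composing those guarantees and chaining the resulting inequalities in the right order.

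First I would dispatch feasibility and running time. The degenerate case is handled by the standing convention: recalling the remark after Lemma \ref{lem:star_tau_inequality}, if $G[X]$ is a star graph we recognize it in polynomial time and return its central vertex, which is optimal; otherwise we run the three computational steps. Each step is polynomial — the approximation via matching amounts to computing a maximal matching, $A_{\name{TSP}}$ is polynomial by hypothesis, and Lemma \ref{lem:feasible_tsp_to_feasible_star} builds $S$ from $T$ in polynomial time in $G$. That same lemma certifies that $S$ is a feasible solution of \name{STAR} for $(G, X)$, so correctness requires no separate argument.

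The heart of the proof is the length bound, which I would assemble by reading off the guarantee attached to each step. Lemma \ref{lem:feasible_tsp_to_feasible_star} gives the exact identity $\length(S) = \length_{d_G}(T)$. Because $T = A_{\name{TSP}}(C, d_G)$ and $A_{\name{TSP}}$ is an $\alpha$-approximation for metric \name{TSP}, we obtain $\length_{d_G}(T) \leq \alpha \cdot \name{TSP}^*(C, d_G)$; here one should note, as the paragraph preceding the algorithm already observes, that $(C, d_G)$ is a genuine metric instance, so the approximation guarantee is applicable. Finally, since $C$ was produced by the approximation via matching, Theorem \ref{thm:bound_tsp_star} applies verbatim and yields $\name{TSP}^*(C, d_G) \leq 3 \cdot \opt$. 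Combining the three facts gives $\length(S) \leq 3\alpha \cdot \opt$, which is exactly the claimed ratio.

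I do not expect a real obstacle at this stage: the delicate part — relating a \name{TSP} tour over the matching-based cover $C$ back to an optimal \name{STAR} cycle, with the two-sided slack introduced by the matching — was entirely absorbed into Theorem \ref{thm:bound_tsp_star}. The only points that warrant a moment of care are bookkeeping ones, namely confirming that the instance handed to $A_{\name{TSP}}$ is metric so that its guarantee is valid, and making sure the star-graph case is not silently omitted from the statement. With those checked, the theorem follows immediately from the composition above.
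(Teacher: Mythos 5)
Your proposal is correct and follows essentially the same route as the paper: feasibility and polynomial time come from Lemma \ref{lem:feasible_tsp_to_feasible_star}, and the ratio is obtained by chaining $\length(S) = \length_{d_G}(T) \leq \alpha\,\name{TSP}^*(C, d_G) \leq 3\alpha\,\opt$ via Theorem \ref{thm:bound_tsp_star}. The extra remarks about the star-graph case and the metricity of $(C, d_G)$ are points the paper also makes (just outside the proof itself), so nothing is missing or different in substance.
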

\begin{proof}
The algorithm is polynomial, because each step is polynomial. The answer $S$ is a feasible solution of \name{STAR}, as stated in Lemma \ref{lem:feasible_tsp_to_feasible_star}. Regarding the performance guarantee,
\begin{align*}
\length(S) &= \length_{d_G}(T) &\text{(Lemma \ref{lem:feasible_tsp_to_feasible_star})}\\
&\leq \alpha \text{ } \name{TSP}^*(C, d_G) &\text{(} A_{\name{TSP}} \text{ is an } \alpha \text{-approximation)}\\
&\leq 3\alpha \text{ } \opt & \text{(Theorem \ref{thm:bound_tsp_star})}
\end{align*}
\qed
\end{proof}

Using Christofides' $3/2$-approximation algorithm for metric \name{TSP} \cite{Ch76}, we get the following concrete algorithm.

\begin{corollary}
\label{cor:planar_star}
There is a $9/2$-approximation algorithm for \name{STAR}.
\end{corollary}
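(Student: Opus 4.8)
The plan is to instantiate the general factory provided by Theorem \ref{th:star_approximation_algorithm} with a concrete approximation algorithm for metric \name{TSP}. That theorem already promises a $3\alpha$-approximation for \name{STAR} whenever $A_{\name{TSP}}$ is an $\alpha$-approximation for metric \name{TSP}, so all the work has been done and it suffices to plug in the best constant-factor metric \name{TSP} algorithm available, namely Christofides' algorithm, which achieves $\alpha = 3/2$.

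First I would recall that Christofides' algorithm is a polynomial-time $3/2$-approximation for metric \name{TSP}. The only hypothesis to verify is that the \name{TSP} instance actually passed to $A_{\name{TSP}}$ inside Algorithm \ref{alg:star_approximation}, namely $(C, d_G)$, is genuinely metric. This was already observed immediately after the statement of Algorithm \ref{alg:star_approximation}: the cost function $d_G$ is the shortest-path distance in $G$, hence it is symmetric, nonnegative, and satisfies the triangle inequality, so Christofides' algorithm applies without any modification.

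Taking $A_{\name{TSP}}$ to be Christofides' algorithm and setting $\alpha = 3/2$ in Theorem \ref{th:star_approximation_algorithm}, I conclude that Algorithm \ref{alg:star_approximation} runs in polynomial time and returns a feasible \name{STAR} solution $S$ satisfying $\length(S) \leq 3\alpha \, \opt = 3 \cdot (3/2) \, \opt = (9/2)\,\opt$, which is exactly the claimed $9/2$-approximation.

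There is essentially no obstacle here beyond confirming applicability, since the entire difficulty was absorbed into Theorem \ref{thm:bound_tsp_star} and Theorem \ref{th:star_approximation_algorithm}. The one point worth keeping in mind is that the degenerate case in which $G[X]$ is a star graph, where Lemma \ref{lem:star_tau_inequality} does not apply, has already been set aside and handled optimally by returning the central vertex, so the bound coming from the metric \name{TSP} reduction is valid on all remaining instances.
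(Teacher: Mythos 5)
Your proposal is correct and matches the paper exactly: the corollary is obtained by instantiating Theorem \ref{th:star_approximation_algorithm} with Christofides' $3/2$-approximation for metric \name{TSP}, giving $3 \cdot (3/2) = 9/2$. Your additional checks (that $(C, d_G)$ is a metric instance and that the star-graph case of $G[X]$ is handled separately) are points the paper also notes before stating the corollary.
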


If $G$ is restricted to some subclass of graphs, we could use a more specific approximation algorithm $A_{\name{TSP}}$ (one that doesn't work for all metric instances), and get a better approximation guarantee. For example, if $G$ is a planar graph (for instance, if $G$ is a grid graph), then we can use a PTAS \cite{Gr95}.

\begin{corollary}
For every constant $\varepsilon > 0$, there is a $(3 + \varepsilon)$-approximation algorithm for planar instances of \name{STAR}.
\end{corollary}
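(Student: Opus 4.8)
The plan is to reuse the entire machinery of Algorithm \ref{alg:star_approximation} unchanged and simply feed it a sharper metric-\name{TSP} subroutine than Christofides'. By Theorem \ref{th:star_approximation_algorithm}, any $\alpha$-approximation $A_{\name{TSP}}$ for the \name{TSP} instances that the algorithm produces yields a $3\alpha$-approximation for \name{STAR}; so to obtain the factor $3 + \varepsilon$ it suffices to exhibit, for planar $G$, an $A_{\name{TSP}}$ with $\alpha = 1 + \varepsilon/3$. The key observation is that the \name{TSP} instance handed to $A_{\name{TSP}}$ in line 3 is always $(C, d_G)$, where $C \subseteq V$ and $d_G$ is the shortest-path metric of $G$. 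When $G$ is planar, this is exactly the regime covered by the planar-graph \name{TSP} approximation scheme of \cite{Gr95}.

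Concretely, given $\varepsilon > 0$ I would set $\delta := \varepsilon / 3$ and let $A_{\name{TSP}}$ be the $(1 + \delta)$-approximation obtained from the PTAS \cite{Gr95}; since $\delta$ is a fixed constant, this subroutine runs in polynomial time. Running Algorithm \ref{alg:star_approximation} with this choice is then polynomial overall, and Theorem \ref{th:star_approximation_algorithm} gives, for the output $S$,
\[
\length(S) \leq 3\alpha \text{ } \opt = 3(1 + \delta) \text{ } \opt = (3 + \varepsilon) \text{ } \opt,
\]
which is the claimed guarantee. Feasibility of $S$ is inherited from Lemma \ref{lem:feasible_tsp_to_feasible_star}, exactly as in the proof of Theorem \ref{th:star_approximation_algorithm}.

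The step I expect to be the main obstacle is justifying that \cite{Gr95} actually applies to $(C, d_G)$. Two points need care. First, $(C, d_G)$ is the metric closure (restricted to $C$) of a planar graph's shortest-path metric; an optimal \name{TSP} tour on it corresponds precisely to a shortest closed walk in $G$ visiting every vertex of $C$, so the optimization is genuinely carried out on the planar graph $G$ and never on the complete — and generally non-planar — graph $K(C)$. This is what keeps us inside the planar regime, and it is the reason the restriction to planar $G$ (in particular, to grid graphs) is what makes the improvement possible. Second, because only the vertices of $C$ rather than all of $V$ must be visited, one should confirm that the cited scheme handles this subset/Steiner-walk version of planar \name{TSP} (or reduce to it), rather than only the variant that spans all vertices. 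I would state this correspondence explicitly and, as the authors do in the surrounding discussion, invoke \cite{Gr95} for the near-optimal closed walk, after which the arithmetic $3(1+\varepsilon/3) = 3 + \varepsilon$ closes the argument.

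\begin{proof}
Given $\varepsilon > 0$, set $\delta := \varepsilon / 3$. Since $G$ is planar, the \name{TSP} instance $(C, d_G)$ constructed in Algorithm \ref{alg:star_approximation} is the shortest-path metric of a planar graph restricted to $C$, and thus admits the PTAS of \cite{Gr95}. Let $A_{\name{TSP}}$ be the resulting $(1 + \delta)$-approximation; it runs in polynomial time because $\delta$ is constant. Applying Theorem \ref{th:star_approximation_algorithm} with $\alpha = 1 + \delta$, Algorithm \ref{alg:star_approximation} is a polynomial-time algorithm whose output $S$ is feasible and satisfies $\length(S) \leq 3(1 + \delta) \text{ } \opt = (3 + \varepsilon) \text{ } \opt$.
\qed
\end{proof}
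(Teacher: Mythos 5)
Your proposal is correct and matches the paper's own (implicit) argument exactly: the corollary is obtained by plugging the planar-graph \name{TSP} PTAS of \cite{Gr95} into Theorem \ref{th:star_approximation_algorithm} with $\alpha = 1 + \varepsilon/3$. Your side remark that one must check the cited PTAS handles the subset/Steiner variant of planar \name{TSP} induced by $(C, d_G)$ is a fair point of care that the paper itself glosses over, but it does not change the approach.
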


\section{An approximation algorithm for grids full of customers}
\label{sec:approximation_grid}

A typical and desired case in the newspaper delivery business is having neighborhoods full of customers. We model such a dense neighborhood with a grid graph, where almost every edge is in $X$. In this section, we propose a method to approximate the optimal solution, tailored for this dense setting.

The key idea is that since almost every edge is in $X$, any feasible solution will cover almost every edge of $E$. What if instead of covering just $X$, we cover the whole set $E$? We show that if $|E - X| = o(|E|)$, then there is such a na\"ive tour that is guaranteed to have length at most a factor $3 / 2 + \varepsilon$ of the optimal, for sufficiently large grids.

A cycle that covers every edge in a graph is somewhat similar to the concept of \textit{space-filling curve}. Mathematically, a space-filling curve is a curve whose range contains a certain $2$-dimensional area, for example the unit square. Space-filling curves have been used before to compute tours for the \name{TSP}. In 1989, Platzman and Bartholdi \cite{Pl89} proved that if we visit the vertices in the order given by a specific space-filling curve, we get an $O(\log n)$-approximation algorithm. In the graph-theoretical setting of \name{STAR}, \textit{filling} means to cover edges, but not necessarily to visit every vertex. Our dense-case approximation can be thought of as a \textit{space-filling cycle}.

Before constructing this particular cycle we prove some auxiliary results that will help us to analyze its performance.

\begin{lemma}
Let $e$ be an edge of a graph $G$. Then $\tau(G) \leq \tau(G - e) + 1$.
\end{lemma}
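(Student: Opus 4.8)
The plan is to prove the inequality by taking a minimum vertex cover of $G - e$ and augmenting it with a single well-chosen vertex. First I would fix a minimum vertex cover $C$ of $G - e$, so that $|C| = \tau(G - e)$ and every edge of $G - e$ has at least one endpoint in $C$. Write the removed edge as $e = (u, v)$.

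The key observation is that the edge sets of $G$ and $G - e$ differ only in the single edge $e$, so $C$ already covers every edge of $G$ with the sole possible exception of $e$ itself. I would therefore set $C' := C \cup \{u\}$ (either endpoint of $e$ works equally well). This set covers $e$, and it covers every other edge of $G$ because those edges are already covered by $C$. Hence $C'$ is a feasible vertex cover of $G$, and consequently $\tau(G) \leq |C'| \leq |C| + 1 = \tau(G - e) + 1$, which is exactly the claimed bound.

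There is no genuine obstacle in this argument; it is a one-line augmentation. The only point worth stating explicitly is that $C$ might already cover $e$ (when $u$ or $v$ happens to lie in $C$), in which case $C$ itself is already a vertex cover of $G$ and the bound holds with room to spare. Either way the inequality $|C'| \leq |C| + 1$ is valid, so the conclusion follows in all cases.
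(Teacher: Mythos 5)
Your proposal is correct and is essentially the same argument as the paper's: take a (minimum) vertex cover of $G - e$, add one endpoint of $e$ if needed, and observe the result covers all of $G$. Nothing further is required.
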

\begin{proof}
If we take any vertex cover of $G - e$ and add one of the endpoints of $e$ (if not already in the vertex cover), we get a vertex cover of $G$.
\qed
\end{proof}

In what follows, we will write $\overline{X} := E - X$.

\begin{lemma}
\label{lem:complement_bound}
Let $(G, X)$ be an instance of \name{STAR}, such that $G[X]$ is not a star graph. Then $\tau(G) \leq \opt + |\overline{X}|$.
\end{lemma}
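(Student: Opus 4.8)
The plan is to bootstrap from the two lemmas already in hand. The target inequality $\tau(G) \le \opt + |\overline{X}|$ splits naturally into two pieces that can be proved independently and then chained: a bound relating $\tau(G)$ to $\tau(G[X])$, and a bound relating $\tau(G[X])$ to $\opt$. The second piece is immediate, since $G[X]$ is assumed not to be a star graph, so Lemma \ref{lem:star_tau_inequality} applies directly and gives $\opt \ge \tau(G[X])$.

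For the first piece I would establish $\tau(G) \le \tau(G[X]) + |\overline{X}|$ by adding the edges of $\overline{X} = E - X$ back into $G[X]$ one at a time. Writing $\overline{X} = \{f_1, \dots, f_k\}$ with $k = |\overline{X}|$, I set $G_0 = G[X]$ and $G_j = G_{j-1} + f_j$, so that $G_k$ has edge set $X \cup \overline{X} = E$ and therefore $\tau(G_k) = \tau(G)$ (isolated vertices play no role in a minimum vertex cover, so the induced-subgraph bookkeeping is harmless). Applying the single-edge lemma just proved, namely $\tau(H) \le \tau(H - e) + 1$, to each insertion yields $\tau(G_j) \le \tau(G_{j-1}) + 1$, and a straightforward induction over $j = 1, \dots, k$ telescopes to $\tau(G) = \tau(G_k) \le \tau(G_0) + k = \tau(G[X]) + |\overline{X}|$.

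Chaining the two bounds finishes the argument: $\tau(G) \le \tau(G[X]) + |\overline{X}| \le \opt + |\overline{X}|$. I do not anticipate any genuine obstacle, as this is a clean composition of the two preceding lemmas rather than a new idea. The only points that warrant slight care are the bookkeeping in the inductive edge-addition step — in particular noting that the terminal graph $G_k$ has exactly the edge set $E$ of $G$, so that its vertex cover number really equals $\tau(G)$ — and recalling that the hypothesis that $G[X]$ is not a star is precisely what licenses the invocation of Lemma \ref{lem:star_tau_inequality}.
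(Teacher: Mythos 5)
Your proof is correct and follows essentially the same route as the paper: both telescope the single-edge lemma $\tau(H) \le \tau(H - e) + 1$ over the edges of $\overline{X}$ to get $\tau(G) \le \tau(G[X]) + |\overline{X}|$, then invoke Lemma \ref{lem:star_tau_inequality} using the non-star hypothesis. Your explicit remark that isolated vertices make the $G - \overline{X}$ versus $G[X]$ distinction harmless is a small point the paper glosses over.
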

\begin{proof}
If we repeatedly apply the previous lemma, each time subtracting a new vertex of $\overline{X}$, we get $\tau(G) \leq \tau(G - \overline{X}) + |\overline{X}| = \tau(G[X]) + |\overline{X}|$. Using Lemma \ref{lem:star_tau_inequality} we arrive to the desired inequality.
\qed
\end{proof}

The proof plan is to construct a space-filling cycle, compare its length with $\tau(G)$ and then use Lemma \ref{lem:complement_bound} to bound its performance. It will come in handy to know the exact value of $\tau(G)$ when $G$ is a grid graph.

\begin{lemma}
Let $G$ be a grid graph with $n$ rows and $m$ columns. Then $\tau(G) = \floor{nm / 2}$.
\end{lemma}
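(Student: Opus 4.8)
The plan is to squeeze $\tau(G)$ between a matching lower bound and an explicit vertex-cover upper bound, both equal to $\floor{nm/2}$, using the fact that a grid is bipartite. No strong duality (König's theorem) will be needed; only the trivial inequality $\tau \ge \nu$ plays a role.

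First, for the upper bound, I would exploit the checkerboard $2$-coloring: color the vertex in row $i$ and column $j$ black if $i+j$ is even and white otherwise. Since every edge of $P_n \square P_m$ joins two vertices whose coordinate sums differ by exactly one, each edge has one black and one white endpoint, so \emph{either} color class is a vertex cover. A short count shows the two classes differ in size by at most one, the smaller having exactly $\floor{nm/2}$ vertices; taking it as the cover gives $\tau(G) \le \floor{nm/2}$.

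For the lower bound I would produce a matching $M$ of size $\floor{nm/2}$, since every vertex cover must contain at least one endpoint of each matching edge, whence $\tau(G) \ge |M|$. Construct $M$ by cases. If $m$ is even, each row is a path $P_m$ on an even number of vertices and admits a perfect matching; the union over all $n$ rows matches all $nm$ vertices. If $n$ is even, do the same column-wise. If both $n$ and $m$ are odd, match the subgrid on the first $m-1$ columns perfectly within its rows (each now of even length $m-1$), and match the remaining last column, a path $P_n$, leaving a single vertex unpaired; the resulting $M$ has $n(m-1)/2 + (n-1)/2 = (nm-1)/2 = \floor{nm/2}$ edges.

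Combining the two bounds yields $\tau(G) = \floor{nm/2}$. I expect no real obstacle: the only care needed is the bookkeeping in the both-odd case and the verification that the smaller color class has size exactly $\floor{nm/2}$. The latter amounts to noting that $nm$ is odd precisely when both $n$ and $m$ are odd, in which case the black class has $(nm+1)/2$ vertices and the white class $(nm-1)/2$, so the color imbalance is exactly one and the smaller class realizes $\floor{nm/2}$.
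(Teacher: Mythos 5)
Your proof is correct and follows essentially the same strategy as the paper: the upper bound comes from taking the smaller class of a bipartition (which is a vertex cover since the grid is bipartite), and the lower bound comes from exhibiting an explicit matching of size $\floor{nm/2}$ and using $\tau \geq \nu$. Your matching is built row-by-row (with a case split on parity) rather than by the paper's every-other-row pattern, but this is only a cosmetic difference, and your case analysis is arguably cleaner and easier to verify.
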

\begin{proof}
($\leq$) Note that $G$ is bipartite. Consider any bipartition of its vertices. Both subsets of the partition are vertex covers, and since there are $nm$ vertices in total, one of them must have size at most $\floor{nm / 2}$.

($\geq$) We use the fact that the size of any matching is always less than or equal to the size of any vertex cover. It suffices to exhibit a matching of size $\floor{nm / 2}$. To build such a matching, we go over every other row, and for each one we take every other horizontal edge. If $m$ is odd, we also take every other vertical edge of the last column. It's clear that this is a matching, and it's a matter of simple algebra to verify that it has $\floor{nm / 2}$ edges.
\qed
\end{proof}

We are ready to exhibit and analyze our space-filling cycle.

\begin{theorem}
\label{thm:approximation_grid_constant}
There is an approximation algorithm for grid \name{STAR} that computes solutions with length at most $(3/2 + O(1/n + 1/m))(\opt + |\overline{X}|)$, where $n$ and $m$ are the number of rows and columns, respectively, of the input grid graph.
\end{theorem}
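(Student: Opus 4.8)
The plan is to build, directly from the dimensions $n$ and $m$, a single \emph{space-filling cycle} $S$ that covers the entire edge set $E$ (and hence $X$, so that $S$ is automatically feasible for \name{STAR} no matter what $X$ is), and whose length is $\frac{3}{4}nm + O(n + m)$. Since this cycle ignores $X$, it can be produced in $O(nm)$ time, so the resulting algorithm is polynomial. The final estimate then follows by purely arithmetic manipulation: using $\tau(G) = \floor{nm/2}$ we rewrite $\length(S) = \frac{3}{4}nm + O(n+m) = (3/2 + O(1/n + 1/m))\,\tau(G)$, and then Lemma \ref{lem:complement_bound} gives $\tau(G) \leq \opt + |\overline{X}|$, yielding the claimed bound $\length(S) \leq (3/2 + O(1/n + 1/m))(\opt + |\overline{X}|)$.

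The heart of the argument is the construction of $S$, which I would phrase through a vertex cover of the right size. The reason $3/2$ is the target constant is that an optimal vertex cover has $\tau(G) = \floor{nm/2} \approx nm/2$ vertices, while a simple cycle of length $L$ visits $L$ distinct vertices; so a cycle whose vertex set is a vertex cover of size $\approx \frac{3}{2}\tau(G) = \frac{3}{4}nm$ realizes the factor $3/2$ exactly. To get such a cover I would remove from $V$ an independent set $I$ of size $\approx nm/4$, taken in the interior of the grid, for instance the interior vertices $(i,j)$ with both $i$ and $j$ even. Its complement $W := V \setminus I$ is a vertex cover (the complement of an independent set always is), it has $|W| = \frac{3}{4}nm + O(n+m)$, and --- crucially --- by keeping $I$ away from the boundary every vertex of $W$ still has at least two neighbours in $W$, which is what makes $W$ traceable by a single closed walk. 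I would then describe an explicit boustrophedon (``serpentine'') closed walk that visits every vertex of $W$, treating the retained odd-position vertices of the sparse rows as pass-through vertices rather than dead-end excursions, and argue that it covers all of $E$ because $W$ does.

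The length accounting is the next step. Visiting the vertices of $W$ along consecutive grid edges costs essentially $|W|$, and the only overhead comes from the turnarounds at the ends of the snake and from the few extra boundary vertices we chose to retain; both contribute $O(n+m)$ in total, i.e. $O(1)$ per row and per column. Hence $\length(S) = |W| + O(n+m) = \frac{3}{4}nm + O(n+m)$, and dividing by $\tau(G) = \floor{nm/2}$ turns the additive $O(n+m)$ term into the relative error $O((n+m)/nm) = O(1/n + 1/m)$, as required.

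I expect the main obstacle to be precisely the explicit construction of the serpentine closed walk and the verification that its overhead is only $O(n+m)$. The delicate points are (i) choosing $I$ so that $W$ has no degree-one vertices --- a naive ``every even--even vertex'' choice creates degree-one corners such as $(1,m)$, which would force wasteful back-and-forth traversals and ruin the constant; and (ii) handling the parities of $n$ and $m$ and the boundary rows and columns, since these are exactly what produce the lower-order terms and must be shown to stay within the $O(n+m)$ budget. Once the walk is pinned down and its length controlled, the remainder of the proof is the routine arithmetic and the single application of Lemma \ref{lem:complement_bound} described above.
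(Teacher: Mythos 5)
Your overall architecture is exactly the paper's: build a single cycle that covers all of $E$ (hence $X$ for free) with length $\tfrac{3}{4}nm + O(n+m)$, rewrite this as $(3/2 + O(1/n+1/m))\floor{nm/2} = (3/2+O(1/n+1/m))\,\tau(G)$, and finish with Lemma \ref{lem:complement_bound}; that closing arithmetic is correct and is precisely what the paper does. The gap is that the one step you defer --- the explicit closed walk --- is the entire content of the theorem, and the concrete choices you do commit to would not deliver it. Taking $I$ to be the interior vertices with both coordinates even, every retained vertex of a sparse (even) row has only \emph{vertical} neighbours in $W=V\setminus I$ (its horizontal neighbours lie in $I$), so inside a two-row stripe these vertices are dead ends: a row-by-row serpentine must spend a down-and-up detour of $2$ edges on each of them, giving roughly $m-1+2\cdot(m/2)\approx 2m$ edges per stripe and $\approx nm = 2\,\tau(G)$ overall --- a factor $2$, not $3/2$. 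Your fallback justification, that minimum degree $2$ in $G[W]$ makes $W$ ``traceable by a single closed walk'' with only $O(n+m)$ overhead, is not valid either: minimum degree $2$ guarantees neither connectivity nor anything resembling Hamiltonicity, so $\length(S)=|W|+O(n+m)$ does not follow from it.

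What is needed, and what the paper supplies, is an independent set whose omitted vertices \emph{alternate between the two rows of each stripe}. The paper traverses each two-row stripe with the repeating move sequence down, right, right, up, right, right (a ``period''), which skips top-row vertices at columns $\equiv 2 \pmod 4$ and bottom-row vertices at columns $\equiv 0 \pmod 4$; with that choice every retained vertex is a pass-through of the horizontal serpentine, each stripe costs exactly $m-1+\ceil{m/2}$ edges, and the inter-stripe moves plus the path closing the cycle contribute the $O(n+m)$ term. If you replace your even--even independent set with this alternating one (or exhibit any other explicit walk meeting the $|W|+O(n+m)$ budget and verify the parity and boundary cases you flag), the rest of your write-up goes through essentially verbatim.
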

\begin{proof}
We introduce some terminology to describe the cycle. Enumerate the grid's rows from $1$ to $n$, being $1$ the uppermost row and $n$ the lowest one. We divide the grid into horizontal \textit{stripes}, such that the $i$-th stripe, $i \geq 1$, consists of rows $2i - 1$ and $2i$. If $n$ is odd, the last stripe is formed only by the last row.

First we sketch a high-level description. Starting from the upper left corner, we will visit the stripes in order. Initially we move right, until we get to the right border of the grid, the end of the first stripe. Then we go down to the second stripe, and now move left until we get to the left border. Next we go down to the third stripe. The process continues until we finish visiting the last stripe. If the last one is a single row, we move in a straight line. Finally, we go back to the starting position.

More specifically, on stripe $i$, for some odd $i$, we move from left to right following a square wave pattern, which we call \textit{period}. A period is a sequence of the following single-edge moves: down, right, right, up, right, right. This is illustrated in Figure \ref{subfig:period}. We repeat this sequence of moves until it's no longer possible, at the right border of the grid. At this point, we could be anywhere between the beginning and the end of a period. In any case, we stop, and move exactly two edges down. On stripe $i + 1$ we move in the opposite direction, from right to left, repeating the steps we did on stripe $i$, but in reverse order. When we get to the left border, we go down two edges again, and we are ready to repeat the process. When we reach the end of the grid, we close the cycle by adding a shortest path to the initial vertex. An example of this construction is shown in Figure \ref{subfig:example}.

\begin{figure}
\centering
\subfloat[A period.]{
	\label{subfig:period}
	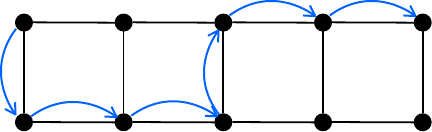
	}\hfill
\subfloat[The tour for a $5 \times 7$ grid. The blue arrows show how each stripe is traversed. The gray arrows show how the cycle goes from one stripe to the next one. The path between the lower right corner and the upper left corner, that closes the cycle, is not drawn for clarity.]{
	\label{subfig:example}
	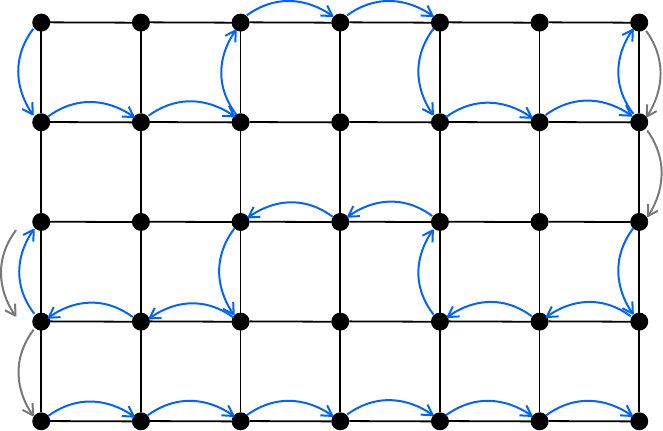
}
\caption{}
\label{fig:test}
\end{figure}

Let $C$ be this cycle. It's easy to see that $C$ covers each edge of $G$, and that it can be computed in polynomial time. Our approximation algorithm simply outputs $C$. We now show that $\ell(C) \leq (3/2 + O(1/n + 1/m)) \tau(G)$. By Lemma \ref{lem:complement_bound}, this implies the desired bound.

Each of the $\floor{n / 2}$ two-rows stripes contains $m - 1$ horizontal and $\ceil{m / 2}$ vertical edges of $C$. To move between two consecutive two-rows stripes, $C$ uses exactly $2$ edges. Additionally, if $n$ is odd, the last stripe is a single row, and we account $m - 1$ edges for moving along that row, plus $2$ edges to move from the previous stripe. Finally, we have at most $n + m - 2$ extra moves to go from the last stripe to the initial position. Summing everything,
\[\ell(C) \leq \floor{n / 2}(m - 1 + \ceil{m / 2}) + (\floor{n / 2} - 1)\text{ }2 + (m - 1 + 2) + (n + m - 2)\]

\noindent
The first term accounts for intra-stripes moves, the second for inter-stripes moves, the third for a potential single-row stripe, and the last one for the cost to go back to the initial position. A sloppy bounding of the floor and ceiling functions yields
\begin{align*}
\ell(C) &\leq (n / 2)(m - 1 + m / 2 + 1) + 2(n / 2 - 1) + (m + 1) + (n + m - 2)\\
&= (3/2)(nm / 2) + 2n + 2m - 3\\
&\leq (3/2)\tau(G) + 2n + 2m - 9 / 4\\
&= (3 / 2 + O(1 / m + 1 / n)) \tau(G)
\end{align*}
\qed
\end{proof}

\begin{corollary}
\label{cor:star_approximation_grid}
There is an approximation algorithm for grid \name{STAR} such that for every $\varepsilon > 0$, there exist positive numbers $n_{\varepsilon}$ and $m_{\varepsilon}$ such that the algorithm computes solutions with length at most $(3/2 + \varepsilon)(\opt + |\overline{X}|)$, for every input grid with $n \geq n_{\varepsilon}$ rows and $m \geq m_{\varepsilon}$ columns.
\end{corollary}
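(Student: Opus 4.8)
The plan is to invoke Theorem \ref{thm:approximation_grid_constant} directly and merely translate its asymptotic guarantee into the $\varepsilon$-form demanded here. The algorithm is the very same space-filling cycle $C$ from that theorem, so no new construction is needed; the entire content of the corollary is unpacking the hidden constant in the $O(1/n + 1/m)$ term.

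First I would make the big-O explicit. Returning to the chain of inequalities in the proof of Theorem \ref{thm:approximation_grid_constant}, we have $\ell(C) \leq (3/2)\tau(G) + 2n + 2m - 9/4$, and since $\tau(G) = \lfloor nm/2 \rfloor \geq (nm - 1)/2$, dividing the surplus over $(3/2)\tau(G)$ by $\tau(G)$ gives a bound of the form $c(1/n + 1/m)$ for some absolute constant $c$ (a short computation yields $c$ close to $4$, with a lower-order correction that only helps). Thus there are a constant $c$ and thresholds $n_0, m_0$ such that $\ell(C) \leq (3/2 + c(1/n + 1/m))\,\tau(G)$ whenever $n \geq n_0$ and $m \geq m_0$.

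Next, given $\varepsilon > 0$, I would set $n_\varepsilon := \max\{n_0, \lceil 2c/\varepsilon \rceil\}$ and $m_\varepsilon := \max\{m_0, \lceil 2c/\varepsilon \rceil\}$. For $n \geq n_\varepsilon$ and $m \geq m_\varepsilon$ this forces $c/n \leq \varepsilon/2$ and $c/m \leq \varepsilon/2$, hence $c(1/n + 1/m) \leq \varepsilon$ and therefore $\ell(C) \leq (3/2 + \varepsilon)\,\tau(G)$. Finally, applying Lemma \ref{lem:complement_bound} (which requires $G[X]$ not to be a star; the star case is handled trivially as before) gives $\tau(G) \leq \opt + |\overline{X}|$, so $\ell(C) \leq (3/2 + \varepsilon)(\opt + |\overline{X}|)$, as claimed.

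There is no genuine obstacle here; the only step demanding a little care is pinning down the constant $c$ from the theorem's final inequality and handling the floor in $\tau(G) = \lfloor nm/2 \rfloor$, so that the passage from the additive surplus $2n + 2m - 9/4$ to an honest $c(1/n + 1/m)$ multiple of $\tau(G)$ is rigorous rather than merely asymptotic.
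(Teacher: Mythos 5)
Your proposal is correct and matches the paper's intent exactly: the corollary is stated without a separate proof precisely because it is the immediate unpacking of the $O(1/n+1/m)$ term in Theorem~\ref{thm:approximation_grid_constant}, which is what you do, including the final application of Lemma~\ref{lem:complement_bound}. Your extra care in extracting an explicit constant $c$ and thresholds from the inequality $\ell(C) \leq (3/2)\tau(G) + 2n + 2m - 9/4$ together with $\tau(G) = \lfloor nm/2\rfloor$ is sound and adds rigor the paper leaves implicit.
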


Recall that we are interested in the case where almost all edges belong to $X$, that is, $|E - X| = |\overline{X}| = o(|E|)$. As we can see, the smaller the $|\overline{X}|$, the better the approximation, showing that the space-filling cycle is a promising strategy for the dense readership case.

\begin{theorem}
There is an approximation algorithm for grid \name{STAR} such that for every $\varepsilon > 0$, there exist positive numbers $n_{\varepsilon}$ and $m_{\varepsilon}$ such that the algorithm is $(3/2 + \varepsilon)$-approximated, for every input grid with $n \geq n_{\varepsilon}$ rows, $m \geq m_{\varepsilon}$ columns and $|\overline{X}| = o(|E|)$.
\end{theorem}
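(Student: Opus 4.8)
The plan is to reuse the space-filling cycle algorithm of Corollary~\ref{cor:star_approximation_grid} verbatim and argue that, under the density assumption $|\overline{X}| = o(|E|)$, the additive term $|\overline{X}|$ becomes negligible compared to $\opt$. In other words, the bound $(3/2 + \varepsilon)(\opt + |\overline{X}|)$ should collapse into a genuine multiplicative $(3/2 + \varepsilon')$-approximation. The algorithm itself is the same one from Theorem~\ref{thm:approximation_grid_constant}; all of the work lies in converting its additive guarantee into a ratio.

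First I would establish a lower bound on $\opt$ in terms of $|X|$. Since $G$ is a grid, every vertex has degree at most $4$, so any vertex cover of $G[X]$ must contain at least $|X|/4$ vertices, giving $\tau(G[X]) \geq |X|/4$. Because the density condition forces $|X|$ to be large, $G[X]$ is certainly not a star graph (in a grid the largest star has only $4$ edges), so Lemma~\ref{lem:star_tau_inequality} applies and yields $\opt \geq \tau(G[X]) \geq |X|/4$.

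Next I would translate the density hypothesis. A grid with $n$ rows and $m$ columns has $|E| = 2nm - n - m = \Theta(nm)$ edges. From $|\overline{X}| = o(|E|)$ we get $|X| = |E| - |\overline{X}| = (1 - o(1))|E| = \Theta(|E|)$, and combining with the previous step, $\opt = \Omega(|E|)$. Hence
\[
\frac{|\overline{X}|}{\opt} \leq \frac{4|\overline{X}|}{|X|} = \frac{4|\overline{X}|}{|E| - |\overline{X}|} = o(1),
\]
which tends to $0$ as the grid grows.

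Finally, chaining this estimate with Corollary~\ref{cor:star_approximation_grid}, the length $\length(C)$ of the computed cycle satisfies
\[
\frac{\length(C)}{\opt} \leq \left(\frac32 + \varepsilon\right)\left(1 + \frac{|\overline{X}|}{\opt}\right) = \frac32 + o(1),
\]
where both the $\varepsilon$ coming from the stripe analysis and the $o(1)$ factor vanish as $n, m \to \infty$. Thus for every target $\varepsilon' > 0$ one can pick thresholds $n_{\varepsilon'}, m_{\varepsilon'}$ making the ratio at most $3/2 + \varepsilon'$. The one delicate point is that $|\overline{X}| = o(|E|)$ is a statement about a family of instances rather than a single one, so the argument must be read asymptotically, controlling the two error terms $O(1/n + 1/m)$ and $|\overline{X}|/\opt$ simultaneously; the degree-$4$ bound that turns the additive slack into a vanishing ratio is exactly the step doing the heavy lifting here.
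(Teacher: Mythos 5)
Your proposal is correct and follows essentially the same route as the paper: lower-bound $\opt$ by $\tau(G[X]) \geq |X|/4$ using the degree-$4$ bound in a grid, deduce $|\overline{X}| = o(\opt)$ from the density hypothesis, and absorb the additive $|\overline{X}|$ term from Corollary~\ref{cor:star_approximation_grid} into the multiplicative factor. The paper just makes the final bookkeeping explicit by choosing $\varepsilon_1, \varepsilon_2 > 0$ with $\varepsilon_1 + (3/2)\varepsilon_2 + \varepsilon_1\varepsilon_2 \leq \varepsilon$, where you instead argue informally that both error terms vanish simultaneously as $n, m \to \infty$.
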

\begin{proof}
If $G$ is a grid graph, then $\tau(G[X]) \geq |X| / 4$, because a single vertex can cover up to $4$ edges. Hence, $\opt \geq \tau(G[X]) = \Omega(|X|)$.

Since $|\overline{X}| = o(|E|)$, we have $|X| = \Theta(|E|)$, and thus $\opt = \Omega(|X|) = \Omega(|E|)$. This in turn implies that $|\overline{X}| = o(\opt)$, which means that for all $\varepsilon > 0$ there exist positive integers $n_{\varepsilon}$, $m_{\varepsilon}$ such that $|\overline{X}| < \varepsilon \text{ } \opt$ for every $n \geq n_{\varepsilon}$ and $m \geq m_{\varepsilon}$.

Fix any $\varepsilon > 0$. Let $\varepsilon_1, \varepsilon_2 > 0$ be any two positive reals such that $\varepsilon_1 + (3/2) \varepsilon_2 + \varepsilon_1 \varepsilon_2 \leq \varepsilon$. Instantiate Corollary \ref{cor:star_approximation_grid} with $\varepsilon_1$, and let $n_{\varepsilon_1}$ and $m_{\varepsilon_1}$ be the minimum numbers of rows and columns, respectively. Let $n_{\varepsilon_2}, m_{\varepsilon_2}$ be such that if $n \geq n_{\varepsilon_2}$ and $m \geq m_{\varepsilon_2}$, then $|\overline{X}| < \varepsilon \text{ } \opt$.

Under these choices, if $n \geq n_{\varepsilon} = \max\{n_{\varepsilon_1}, n_{\varepsilon_2}\}$ and $m \geq m_{\varepsilon} = \max\{m_{\varepsilon_1}, m_{\varepsilon_2}\}$, the performance guarantee is
\begin{align*}
(3 / 2 + \varepsilon_1)(\opt + |\overline{X}|) &< (3 / 2 + \varepsilon_1)(\opt + \varepsilon_2 \opt)\\
&\leq (3 / 2 + \varepsilon_1)(1 + \varepsilon_2) \opt\\
&= (3 / 2 + \varepsilon_1 + (3 / 2) \varepsilon_2 + \varepsilon_1 \varepsilon_2) \opt\\
&\leq (3 / 2 + \varepsilon) \opt
\end{align*}
\qed
\end{proof}

\section{Open questions}
\label{sec:open}

In this paper we only considered the unweighted case of \name{STAR}. If the input graph has weights, the problem obviously remains hard, in terms of finding both exact and approximate solutions. Unfortunately, for that case, the approximation strategy we proposed in Theorem \ref{th:star_approximation_algorithm} is no longer useful, because if the vertex cover is agnostic of the weights, then the constructed cycle may be forced to use heavy edges, and therefore the output can be made arbitrarily longer than an optimal solution. Is it possible to adapt the algorithm for the weighted case, or to devise a different constant-factor approximation algorithm?

On a separate note, we showed that there cannot be a PTAS for \name{STAR} unless $\class{P} = \class{NP}$. However, this doesn't rule out the possibility of a PTAS for the grid case, for which the best we have achieved is a $(3/2 + \varepsilon)$-approximation algorithm that only works for a proper subset of instances. Since the grid case is of practical interest, it would be worthwhile to investigate this possibility.

Finally, the problem may be extended in natural ways, like using multiple trucks  or considering the time it takes the driver to carry newspapers to the households.

\section*{Acknowledgements}

Thanks to Mart\'in Farach-Colton for useful discussions and suggestions about the presentation.

\bibliography{paper}

\end{document}